\documentclass[journal]{IEEEtran}
\IEEEoverridecommandlockouts
\usepackage{cite}
\usepackage{amsmath,amssymb,amsfonts}
\usepackage{algorithm}
\usepackage{algorithmic}
\usepackage{graphicx}
\usepackage{textcomp}
\usepackage{xcolor}
\usepackage{autobreak}
\usepackage{amsthm}
\usepackage{changes}
\usepackage{multicol}  
\usepackage{multirow}

\newcommand{\deltav}{\boldsymbol{\delta}}
\newcommand{\phiv}{\boldsymbol{\phi}}
\newcommand{\psiv}{\boldsymbol{\psi}}
\newcommand{\av}{\mathbf{a}}
\newcommand{\bv}{\mathbf{b}}
\newcommand{\nv}{\mathbf{n}}
\newcommand{\vv}{\mathbf{v}}
\newcommand{\wv}{\mathbf{w}}
\newcommand{\xv}{\mathbf{x}}
\newcommand{\yv}{\mathbf{y}}

\newcommand{\Wv}{\mathbf{W}}
\newcommand{\gv}{\mathbf{g}}

\newcommand{\Eb}{\mathbb{E}}
\newcommand{\Cb}{\mathbb{C}}
\newcommand{\Ub}{\mathbb{U}}

\newcommand{\Real}{\mathfrak{R}}
\newcommand{\Image}{\mathfrak{I}}

\theoremstyle{definition}

\newtheorem{proposition}{Proposition}

\def\BibTeX{{\rm B\kern-.05em{\sc i\kern-.025em b}\kern-.08em
    T\kern-.1667em\lower.7ex\hbox{E}\kern-.125emX}}

\begin{document}

\title{{Max-Min Optimal Beamforming for Cell-Free Massive MIMO}}

\author{Andong Zhou$^*$, Jingxian~Wu$^*$,~\IEEEmembership{Senior~Member,~IEEE}, ~Erik G. Larsson$^{\dag}$, \IEEEmembership{Fellow, IEEE,}\\
	and Pingzhi Fan$^\ddag$, \IEEEmembership{Fellow, IEEE}
	\thanks{$^*$A. Zhou and J. Wu are with the Department of Electrical Engineering, University of Arkansas, Fayetteville, AR 72701 USA (e-mails: wuj@uark.edu, az008@email.uark.edu).}
	\thanks{$^{\dag}$E. G. Larsson is with the Department of Electrical Engineering (ISY), Link\"{o}ping University, 581 83 Link\"{o}ping, Sweden (e-mail: erik.g.larsson@liu.se).}
	\thanks{$^{\ddag}$P. Fan is with the Institute of Mobile Communications, Southwest Jiaotong University, Chengdu 611756, P. R. China (e-mail: pzfan@home.swjtu.edu.cn).}
	\thanks{The work of E. G. Larsson and P. Fan was supported in part by a STINT Sino-Swedish cooperation grant.}
}

\maketitle

\begin{abstract}
This letter develops an optimum beamforming method for downlink transmissions in cell-free massive multiple-input multiple-output (MIMO) systems, which employ a massive number of distributed access points to provide concurrent services to multiple users. The optimum design is formulated as a max-min problem that maximizes the minimum signal-to-interference-plus-noise ratio of all users. It is shown analytically that the problem is quasi-concave, and the optimum solution is obtained with the second-order cone programming. The proposed method identifies the best achievable beamforming performance in cell-free massive MIMO systems. The results can be used as benchmarks for the design of practical low complexity beamformers.
\end{abstract}

\begin{IEEEkeywords}
Cell-free massive MIMO, optimum beamforming, power control. 
\end{IEEEkeywords}

\section{Introduction}

Cell-free massive multiple-input multiple-output (MIMO) system employs a massive number of spatially distributed antennas or access points (APs) to provide concurrent services to a large number of users over a given service area \cite{ngo2015cell}, \cite{ngo2017cell}.
In cell-free massive MIMO, the coverage area is no longer segmented into cells. Instead, all APs provide services to all users in a coherent manner through beamforming. The energy efficiency and spectral efficiency of cell-free massive MIMO can be much higher than its cellular counterparts with proper beamforming designs \cite{yang2018energy, liu2019spectral}. 

The operations of both cellular and cell-free massive MIMO systems rely critically on beamforming \cite{marzetta2016fundamentals}, \cite{bjornson2019making}. When the number of antennas tends to infinity yet the number of users is fixed, it is known that the system performance is limited by pilot contamination, and eigenbeamforming  can achieve the asymptotically optimum performance for co-located massive MIMO systems \cite{ashikhmin2012pilot}.  
For cell-free massive MIMO systems, the simple and scalable conjugate beamforming (CB) technique can achieve good performance partly due to its great flexibility in the choice of power control coefficients \cite{ngo2017cell, attarifar2018random, attarifar2019modified}. 
The CB schemes can be implemented in a distributed manner without cooperation or information exchange among the APs \cite{ngo2017cell}. In \cite{nayebi2017precoding}, the performance of systems with CB or zero-forcing (ZF) are compared without considering the impacts of pilot contamination. Unlike CB, ZF is centrally designed at a central unit (CU) by considering small scale fading, thus it can outperform CB with a finite number of APs. 

The objective of this letter is to develop an optimum beamforming (OB) scheme for the downlink of cell-free massive MIMO systems with time division duplex (TDD). In TDD schemes, the APs rely on channel reciprocity to obtain the downlink channel state information (CSI) by estimating the uplink CSI with pilots from the users. The OB beamformer is centrally designed at the CU based on cooperation of all APs to balance between emphasizing desired signal and controlling multi-user interference. In order to realize the maximum potential of cell-free massive MIMO while ensuring fairness among users, the OB design is formulated as a max-min problem that aims at maximizing the minimum instantaneous signal-to-interference-plus noise ratio (SINR) among all users. It is shown through theoretical analysis that the max-min problem is quasi-concave, and the OB precoder is efficiently identified through bisection search with the help of a feasibility problem with second-order conic constraints. 
Since pilot contamination and channel estimation errors have profound impacts on the design of cell-free massive MIMO, the effects of channel estimation errors in both uplink and downlink are evaluated. The proposed OB scheme defines the best achievable beamforming performance in cell-free massive MIMO systems. Thus the results from the optimum design can be used as a benchmark for the design of low complexity beamformers such as ZF or CB.

\section{Cell-Free Massive MIMO System Model}

The cell-free massive MIMO system employs $M$ APs randomly distributed over an unbounded spatial area. The $M$ APs are connected to a CU via ideal optical front-haul links. The APs are synchronized and controlled by the CU. All $M$ APs are used to serve $K$ spatially distributed user equipment (UE) simultaneously. It is assumed that the number of APs is no less than the number of UEs, i.e., $K \leq M$. Each AP or UE is equipped with a single antenna.

The channel coefficient between the $k $th UE and the $ m $th AP can be modeled by $ g_{mk}=\sqrt{\beta_{mk}} h_{mk} \label{cm} $,
\if{\begin{equation}
g_{mk}
=\sqrt{\beta_{mk}} h_{mk} \label{cm}
\end{equation}}\fi
where $ \beta_{mk} $ and $ h_{mk} $ denote large- and small-scale fading, respectively. Large scale fading $ \beta_{mk}$ are assumed known as it changes in a much slower scale compared to the coherence interval.
The small-scale fading are independent and identically distributed (i.i.d) complex normal random variables (RVs) with zero mean and normalized to unit variance.
The channel is considered to be quasi-static, that is, the small-scale fading keeps constant within a coherence interval, and it changes into another independent random value in the next coherence interval.

The system is assumed to operate under TDD. Each time-frequency coherence channel interval is divided for uplink training, downlink training, downlink transmission, and uplink transmission. Each AP estimates the CSI for all UEs locally, and synchronize to the CU via front-haul links. Based on the estimated global CSI, the CU designs downlink beamforming precoders, and then sends the corresponding precoded information to the APs for downlink transmission. In the downlink, each UE first estimates an effective CSI that includes the effects of both the physical channels and the beamformer, and the estimation results are then used for downlink detection.

\vspace{-0.2cm}
\subsection{Uplink training}
The APs perform channel estimation with the help of uplink pilots. 
The channel from the $k$-th UE is estimated by applying the minimum mean squared error (MMSE) estimator as \cite{ngo2017cell}
\begin{equation}
\hat g_{mk}
= \frac{\sqrt{\tau_p\rho_p} \beta_{mk}} {\tau_p\rho_p \sum_{i=1}^K \beta_{mi} |\boldsymbol{\phi}^H_{k} \boldsymbol{\phi}_{i}|^2 +1} \boldsymbol{\phi}^H_{k} \yv_{m}. \label{ce}
\end{equation}
where $ \yv_{m}
=\sqrt{\tau_p \rho_p} \sum_{k=1}^K g_{mk} \boldsymbol{\phi}_k
+ \nv_m $ is the pilot signal observed by the $ m $-th AP, $\boldsymbol{\phi}_k$ is the pilot sequence from the $k$-th UE with $\|\phiv_k\|_2^2 = 1$, $\nv_m$ is additive white Gaussian noise (AWGN) vector, $\tau_c $ and $ \tau_p $ are the lengths of channel coherence interval and uplink pilot sequence, respectively, with $ \tau_c > \tau_p $, $\rho_p$ is normalized signal-to-noise ratio (SNR) of each symbol.

Since ${\hat g}_{mk}$ is a linear transformation of a Gaussian RV $\yv_m$, it is still Gaussian distributed with zero mean and variance
\begin{equation} \label{eqn:gamma}
\gamma_{mk}
= \Eb \left\{ |\hat{g}_{mk}|^2 \right\}
= \frac{\tau_p\rho_p \beta^2_{mk}} {\tau_p\rho_p \sum_{i=1}^K \beta_{mi} |\boldsymbol{\phi}^H_{k} \boldsymbol{\phi}_{i}|^2 +1}. 
\end{equation}

Denote the channel estimation error as $\epsilon_{mk} = g_{mk} - \hat{g}_{mk}$. Based on the orthogonality principle, $\epsilon_{mk}$ is uncorrelated to $\phiv_k^H \yv_m$, thus it is also uncorrelated to ${\hat g}_{mk}$. It can be easily shown that $\epsilon_{mk}$ is complex Gaussian distributed with zero mean, and the variance can be calculated as
\begin{align} \label{eqn:delta}
\delta_{mk} 
 = \Eb[( g_{mk} - \hat{g}_{mk}) g_{mk}^*] 
= \beta_{mk} - \gamma_{mk}, 
\end{align}
where the first equality is based on the orthogonality principle.

\vspace{-0.2cm}
\subsection{Downlink transmission}

Based on the channel estimation results collected from all APs, the CU designs downlink beamforming precoders to achieve uniformly good performance for all users. Denote the size $M \times K$ beamforming matrix as $\Wv = [\wv_1, \cdots, \wv_K]$, where 
$\wv_k = [w_{1k}, \cdots, w_{Mk}]^T$ is beamforming vector for the $k$-th UE. 
The designs of $\Wv$ will be discussed in next section.

After precoding, the signal transmitted by the $m$-th AP is
\begin{equation}
x_{m} =\sqrt{\rho_d} \sum_{k=1}^K w_{mk} q_k \label{ds},
\end{equation}
where $q_k$ is the symbol for the $k$-th UE with $\Eb[|q_k|^2] = 1$, and $\rho_d$ is the normalized SNR. 
The average energy of each symbol at the $m$-th AP is then
$E_m = \Eb[|x_m|^2] = \rho_d \sum_{k = 1}^K |w_{mk}|^2$.

The signal observed at the $k$-th UE can be represented by
\begin{align*}
y_{k}
 =\sqrt{\rho_d} \sum_{m=1}^M g_{mk} \sum_{i=1}^K w_{mi} q_i
+ n_k
= \sqrt{\rho_d} \sum_{i=1}^K a_{ki} q_i +n_k,
\end{align*}
where $a_{ki} = \sum_{m=1}^M g_{mk} w_{mi}$ is the effective downlink channel that includes the effects of both the physical channel and the precoder, and 
$n_k \sim \mathcal{CN}(0,1)$ is AWGN.

\vspace{-0.2cm}
\subsection{Downlink training}
Before downlink data transmission, the CU first transmits beamformed pilot sequences, such that the UE can obtain an estimate of the downlink effective channels  \cite{interdonato2016much}. Due to channel hardening achieved by the effective channel, the UE can obtain a very accurate estimate of the hardened channel gain by using pilots beamformed in the downlink \cite{interdonato2019downlink}.
The beamforming pilot signal from the $m$-th AP is
\begin{equation}
\xv_{m} = \sqrt{\tau_b} \sum_{k = 1}^K w_{mk} \psiv_k,
\end{equation}
where $\psiv_k$ is a length-$\tau_b$ pilot sequence with $\| \psiv_k\|_2 = 1$. The pilot length satisfies $ \tau_c -\tau_p > \tau_b \geq K$ such that there is no contamination. The pilot signal observed at the $k$-th UE is
\begin{align}
\yv_k = \sqrt{\tau_b \rho_b} a_{kk} \psiv_k + \sqrt{\tau_b \rho_b} \sum_{i\neq k} a_{ki} \psiv_i + \nv_k,
\end{align}
where $\rho_b$ is the normalized SNR of each downlink pilot symbol. Since it is difficult to obtain the second order statistics of the effective channel, the downlink effective channel is estimated by following the least squares criterion as
\begin{equation}
\hat{a}_{kk} = \frac{1}{\sqrt{\tau_b\rho_b}} \psiv_k^H \yv_k.
\end{equation}
The variance of the channel estimation error, $\varepsilon_k = a_{kk} - \hat{a}_{kk}$, can be evaluated as
$\Eb[|\varepsilon_k|^2] = \frac{1}{\tau_b\rho_b}$. With the estimated channel, the instantaneous SINR at the $k$-th UE is
\begin{equation}\label{SINR_UE}
\gamma_{\text{UE},k} = \frac{\rho_d|\hat{a}_{kk}|^2}{\rho_d \Eb[|\varepsilon_k|^2] + \rho_d \sum_{i \neq k}^K \Eb[|a_{ki}|^2]+1}.
\end{equation}

\section{Optimum Beamforming}

Instead of explicit AP-UE association, the optimum beamformer in cell-free massive MIMO allows each AP to serve all UEs, with the transmission power between each AP-UE link determined by the corresponding beamforming coefficient. Such an approach can achieve soft associations between AP-UE pairs.
The soft AP-UE association approach can ensure a fully coherent cooperation among all APs, thus reduces interference floor and ensures uniformly good services.

\vspace{-0.2cm}
\subsection{Problem Formulation}
We first formulate the objective and constraints of the optimum design. 
The beamforming vectors are designed by the CU. From the perspective of the CU, the signal received at the $k$-th UE can be written as 
\begin{equation}
\begin{aligned}
y_{k}
& = \underbrace{\sqrt{\rho_d} \sum_{m=1}^M \hat g_{mk} w_{mk} q_k}_{\text{DS}_k}
+ \underbrace{\sqrt{\rho_d} \sum_{m=1}^M \hat g_{mk} \sum_{i\ne k}^K w_{mi} q_i}_{\text{MUI}_k}\\
&+ \underbrace{\sqrt{\rho_d} \sum_{m=1}^M \sum_{i=1}^K \epsilon_{mk} w_{mi} q_i}_{\text{CEE}_k}
+ n_k,
\end{aligned}
\end{equation}
where the signal is divided into three parts: desired signal (DS), multi-user interference (MUI), and un-resolvable signal from channel estimation error (CEE).

With the above notation, the CU will calculate the instantaneous SINR at the $k$-th UE as
\begin{equation} \label{eqn:gammak}
 \gamma_k
 = \frac{ \left|\sum\limits_{m=1}^M \hat g_{mk} w_{mk} \right|^2 }{
 	 \sum\limits_{i\ne k}^K \left|\sum\limits_{m=1}^M \hat g_{mk} w_{mi} \right|^2 
 	+  \sum\limits_{m=1}^M \sum\limits_{i=1}^K \delta_{mk} \left|w_{mi}\right| ^2
 	+ \frac{1}{\rho_d}} .
\end{equation}
It should be noted that the SINR calculated by the CU is different from that perceived by the UE as in \eqref{SINR_UE}. Given that the CU does not have knowledge of the effective downlink channel estimated by the UE, the precoding matrix is designed at the CU by using the uplink channel estimation as in \eqref{eqn:gammak}.

To achieve the goal of offering uniformly good services for all users, we formulate the problem as a max-min problem under a per AP power constraint as
\begin{align*}
\text{(P1)~~~} \underset{\Wv}{\text{maximize}} ~~~~& \underset{k\in \{1,\dots,K\}}{\min} 
  \gamma_k
  \\
\text{subject to}\ ~~~~ &\rho_d \sum_{k = 1}^K |w_{mk}|^2 \leq \rho_d,
\ m=1,\dots, M.
\end{align*}
The power constraint in (P1) is used to ensure that the transmission power in each AP is upper bounded by $\rho_d$. The actual transmission power of the $m$-th depends on the norm of the $m$-th row of $\Wv$. The optimization of $\Wv$ will automatically optimize the transmission power of each AP. 

\subsection{Optimum Design}
The optimum solution to (P1) is identified in this subsection. Problem (P1) is non-concave due to the fact that the beamforming coefficients are on the denominator of the SINR. However, we will first show that (P1) is quasi-concave, such that efficient algorithms exist for the optimum solution of (P1).

To prove the quasi-concave property of (P1), first define a lower bound of the SINR of all UEs as
\begin{align}
\gamma_0  = \min_{ k \in \{1, \cdots, K\}} \gamma_k,
\end{align}
which implies $\gamma_k \geq \gamma_0$, for all $ k $ UEs. 
From \eqref{eqn:gammak}, the SINR constraint $\gamma_k \geq \gamma_0$ can be alternatively represented as 
\begin{equation}\label{qos}
\begin{split}
\frac{1}{\gamma_0} \left| \hat{\gv}_k^T \wv_k \right|^2
\geq \sum_{i\ne k}^K \left| \hat{\gv}_k^T \wv_i \right|^2
+ \sum_{m=1}^M \sum_{i=1}^K \delta_{mk} \left|w_{mi}\right| ^2
+ \frac{1}{\rho_d},
\end{split}
\end{equation}
where $\hat{\gv}_k =  [\hat{g}_{1k} \ \dots \ \hat{g}_{Mk}]^T \in \Cb^{M \times 1}$ contains estimated CSIs for the $k$-th UE.  

Since arbitrary phase shift will not change the norm squared value, we can always find a phase rotation $ \theta_k $ that satisfies
\begin{equation}\label{rot}
\sqrt{ \left| \hat{\gv}_k^T \tilde{\wv}_k \right|^2 }
= \sqrt{ \left| \hat{\gv}_k^T \wv_k e^{-j \theta_k }\right|^2 }
= \hat{\gv}_k^T \tilde{\wv}_k.
\end{equation}

Therefore, by properly choosing the phase of the beamforming vector, we can force $ \hat{\gv}_k^T \wv_k $ to be real positive without losing the optimality of the solution \cite{bjornson2014optimal}. Then \eqref{qos} becomes
\begin{equation} \label{realqos}
\begin{split}
\frac{1}{\sqrt{\gamma_0}} \Real \left( \hat{\gv}_k^T \wv_k \right)
&\geq \sqrt{ \sum_{i\ne k}^K \left| \hat{\gv}_k^T \wv_i \right|^2
+ \sum_{m=1}^M \sum_{i=1}^K \delta_{mk} \left|w_{mi}\right| ^2
+ \frac{1}{\rho_d} },\\
\Image \left( \hat{\gv}_k^T \wv_k \right) &= 0.
\end{split}
\end{equation}

To further simplify the notation, define $ \deltav_k = [\delta_{1k}, \dots, \delta_{Mk}]^T $, which contains the variance of channel estimation errors for the $k$-th UE. Then the second term on the right hand side (RHS) of \eqref{qos} can be written as 
\begin{equation}
\sum_{m=1}^M \sum_{i=1}^K \delta_{mk} \left|w_{mi}\right| ^2
= \sum_{m=1}^M \sum_{i=1}^K \left| \delta_{mk}^{1/2} w_{mi}\right| ^2
= 
\begin{Vmatrix}
\deltav_k^{1/2} \circ \wv_1 \\
\vdots \\
\deltav_k^{1/2} \circ \wv_K
\end{Vmatrix}_2^2,
\end{equation}
where $\av \circ \bv$ is Hadamard product between vectors $\av$ and $\bv$. 

Based on the three terms on the RHS of \eqref{qos}, define 
\begin{equation}\label{vk}
\vv_k
= \left[\vv_{\text{MUI}_k}^T \  \vv_{\text{CEE}_k}^T \ \frac{1}{\sqrt{\rho_d}} \right]^T,
\end{equation} 
where 
\begin{align}
\vv_{\text{MUI}_k} 
&= \left[\hat{\gv}_k^T \wv_1 \ \dots \hat{\gv}_k^T \wv_{k-1} \ \hat{\gv}_k^T \wv_{k+1} \ \dots \hat{\gv}_k^T \wv_K \right]^T, \label{vmui} \\ 
\vv_{\text{CEE}_k} 
&= \left[\left(\deltav_k^{1/2} \circ \wv_1 \right)^T \ \dots \ \left(\deltav_k^{1/2} \circ \wv_K \right)^T \right]^T. 
\label{vcee}
\end{align}
Then \eqref{qos} and \eqref{realqos} can be rewritten as
\begin{align} \label{soc}
\frac{1}{\sqrt{\gamma_0}} \Real \left( \hat{\gv}_k^T \wv_k \right)
\geq \| \vv_k \|, ~~
\Image \left( \hat{\gv}_k^T \wv_k \right) = 0.
\end{align}

Based on the above analysis and notations, now we are ready to prove the quasi-concave property of (P1). 

\begin{proposition}
Problem (P1) is quasi-concave.
\end{proposition}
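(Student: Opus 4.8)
The plan is to prove quasi-concavity by showing that the super-level sets of the objective are convex for every target value, since a function is quasi-concave precisely when all of its super-level sets are convex, and (P1) then qualifies as a quasi-concave maximization because its constraint set is convex as well. Concretely, for a fixed threshold $\gamma_0 \geq 0$ I would examine the super-level set $\mathcal{S}(\gamma_0) = \{ \Wv : \min_k \gamma_k \geq \gamma_0 \} = \bigcap_{k=1}^K \{ \Wv : \gamma_k \geq \gamma_0 \}$ intersected with the per-AP power constraints, and argue that this set is convex. Each power constraint $\sum_{k}|w_{mk}|^2 \leq 1$ is a Euclidean ball constraint on the $m$-th row of $\Wv$, hence convex, so the burden rests on the SINR constraints.

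The central step is to replace the nonconvex description \eqref{qos} of each set $\{\Wv : \gamma_k \geq \gamma_0\}$ by the second-order cone description \eqref{soc}. Here I would invoke the phase-rotation argument already set up in \eqref{rot}--\eqref{realqos}: rotating the $k$-th column as $\wv_k \mapsto \wv_k e^{-j\theta_k}$ leaves every quantity entering the constraints invariant, since the interference it causes to any other user, $|\hat\gv_i^T \wv_k|^2$, the channel-estimation-error term $\sum_m \delta_{mi}|w_{mk}|^2$, and the power $\sum_k|w_{mk}|^2$ are all phase-blind, while its own desired-signal magnitude $|\hat\gv_k^T\wv_k|$ is preserved. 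One may therefore assume without loss of optimality that $\hat\gv_k^T \wv_k \in \mathbb{R}_{\geq 0}$ for every $k$ simultaneously, and under this convention \eqref{qos} is equivalent to the pair of conditions in \eqref{soc}.

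It then remains to verify that \eqref{soc} carves out a convex set in $\Wv$. The vector $\vv_k$ defined in \eqref{vk}--\eqref{vcee} is an affine function of $\Wv$, so $\|\vv_k\|$ is a convex function of $\Wv$; the left-hand side $\tfrac{1}{\sqrt{\gamma_0}}\Real(\hat\gv_k^T \wv_k)$ is real-affine in $\Wv$; hence the inequality $\|\vv_k\| \leq \tfrac{1}{\sqrt{\gamma_0}}\Real(\hat\gv_k^T \wv_k)$ is a standard second-order cone constraint and the equality $\Image(\hat\gv_k^T\wv_k)=0$ is affine, so both are convex. Intersecting these $K$ convex sets with the convex per-AP power balls shows that $\mathcal{S}(\gamma_0)$ is convex for every $\gamma_0 \geq 0$, which establishes that $\min_k \gamma_k$ is quasi-concave and hence that (P1) is quasi-concave.

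I expect the only delicate point to be the phase-rotation reduction, since the raw super-level set written in the original complex variables through \eqref{qos} is not itself convex, owing to the modulus-squared numerator $|\hat\gv_k^T\wv_k|^2$. The crux is thus to justify that fixing $\hat\gv_k^T\wv_k$ real and nonnegative for all $k$ at once is genuinely without loss of generality, i.e.\ that it does not shrink the attainable SINR region and preserves the optimal value, so that the nonconvex set \eqref{qos} may be replaced by the convex set \eqref{soc}. Once this equivalence is in hand, everything downstream is routine convexity bookkeeping.
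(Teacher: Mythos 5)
Your proposal is correct and follows essentially the same route as the paper: both establish quasi-concavity by showing that each super-level set of $\min_k \gamma_k$ admits the second-order cone description \eqref{soc} and is therefore convex, with the per-AP power constraints being convex quadratic. The only difference is that you are more explicit than the paper about why the phase-rotation normalization $\hat{\gv}_k^T\wv_k \in \mathbb{R}_{\geq 0}$ is without loss of generality (the paper simply cites \eqref{soc} when rewriting the level set), which is a welcome clarification of a step the paper glosses over.
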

\begin{proof}
Since the constraint in (P1) is quadratic thus convex, it is sufficient to show that the objective function of (P1) is a quasi-concave function, that is, the upper level set of the objective function is a convex set. 

Define the objective function of (P1) as 
\begin{align}
f(\Wv) = \underset{k\in \{1,\dots,K\}}{\min} 
  \gamma_k.
\end{align}
For any given $\gamma > 0$, the upper level set of $f(\Wv)$ can then be evaluated as
\begin{align*}
\Ub(\gamma) &= \left\{\Wv: f(\Wv) \geq \gamma \right\}, \\
&= \left\{\Wv: \gamma_k \geq \gamma, \text{~for~} k = 1, \cdots, K \right\}, \\
&= \left\{\Wv: \frac{1}{\sqrt{\gamma}} \Real \left( \hat{\gv}_k^T \wv_k \right)
\geq \| \vv_k \|, \text{~for~} k = 1, \cdots, K \right\},
\end{align*}
where the last equality is based on \eqref{soc}. Since $\Ub(\gamma)$ is a second-order cone, it is convex. Thus the objective function is quasi-concave.
\end{proof}

Since (P1) is quasi-concave, we can solve it by performing bisection search over its objective function. Specifically, based on the analysis in this section, (P1) can be converted to an equivalent problem as
\begin{align*}
\text{(P2)~~~} \underset{\Wv, \gamma}{\text{maximize}} ~~~~& \gamma \\ 
\text{subject to} ~~~~ &\frac{1}{\sqrt{\gamma}} \Real \left( \hat{\gv}_k^T \wv_k \right) \geq \| \vv_k \|, \ k=1,\dots, K,\\
& \Image \left( \hat{\gv}_k^T \wv_k \right) = 0, \ k=1,\dots, K, \\ 
&\sum_{k=1}^K \left|w_{mk}\right| ^2 \leq 1,
\ m=1,\dots, M.
\end{align*}
The above problem can be optimally solved by combining bisection search with a convex feasibility problem in a manner that is similar to \cite[Algorithm 2]{ngo2017cell}. It should be noted that he complexity of the bisection algorithm with convex optimization scales in polynomial time with the network size, and it can be quite high for systems with a large number of APs and/or UEs.

\section{Simulation Results}
Simulation results are provided in this section to study the performance of optimum beamforming in cell-free massive MIMO systems. The results are also compared to systems with conjugate or ZF beamforming. The simulation of CB is implemented by following \cite{interdonato2016much} and \cite{interdonato2019downlink}, and the simulation of ZF is extended from \cite{nayebi2017precoding} by including downlink training as in \cite{ngo2013massive}. 
In data transmissions, all three beamforming schemes utilize max-min power control mechanisms to ensure fairness among users. The power control for ZF and CB is performed by using the methods described in \cite{nayebi2017precoding} and \cite{ngo2017cell}, respectively. For all three beamforming methods, including OB, ZF and CB, power normalization is performed by imposing a power constraint for each AP, and the optimum power allocation among APs is performed on the network level by considering the interactions among APs. 

The simulation is performed over a square area with size $ 1 \times 1 \ \text{km}^2$ and it is wrapped around the boundary. The APs and UEs are randomly placed following uniform distribution. The propagation environment is considered to be an urban area using the Hata-COST231 model \cite{cichon1999propagation} 
. The large scale fading coefficient is modeled as
\begin{equation}\label{large-scale}
\beta_{mk}(\text{dB})
= -L-3.5\times 10\log_{10}(d_{mk})+z_{mk},
\end{equation}
where $ L=140.72 $ is the pathloss at a reference distance, the pathloss exponent is 3.5, $ d_{mk} $ is distance between $ m $-th AP and $ k $-th UE in kilometer, and $ z_{mk} $ is shadowing follows i.i.d normal distribution with zero mean and variance $ \sigma_{\text{sh}}^2 = 8 $ dB. 

The RF carrier frequency is assumed to be 1.9 GHz, and the system have bandwidth $ B = 20 $ MHz. The noise power is calculated as $ \sigma_w^2 = B \times k_B \times T_0 \times \sigma_n $, where $ k_B $ is the Boltzmann constant, $ T_0 = 290K $ is the noise temperature and $ \sigma_n = 9 $ dB is the noise figure. The actual transmission power $ \bar{\rho}_p $ and $ \bar{\rho}_d $ is 23 dBm, and the normalized SNR can be calculated as the actual power divided by the noise power $\sigma_w^2$.  

In order to account for the overhead of pilot symbols in channel estimation, define the net throughput for user 
$k$ as
\begin{equation}\label{throughput}
S_k = \frac{B}{2} \left(1- \frac{\tau_p+\tau_b}{\tau_c}\right) \log_2(1+\gamma_{\text{UE},k}),
\end{equation}
where $ \tau_c $ is coherent interval and the factor $1/2$ is used to account for the fact that half of the coherence interval are used for downlink transmission in TDD schemes. The system performance is evaluated by using $S_k$.

During simulations, all pilot sequences are assigned in a random manner. It should be noted that optimum pilot assignment can have significant impacts on systems with pilot contamination. However, only random pilot assignments are considered in this paper due to space limit.

We first compare the net throughput per user for systems with different beamforming schemes. All systems are equipped with $M = 100$ APs and $K = 40$ UEs. The coherent interval is set to 400 and the downlink pilot length is $ \tau_b = K $. Fig. \ref{S_CDF} shows the empirical cumulative distribution functions (CDFs) of the net throughput per user generated from 200 realizations with random geometric locations, shadowing, small-scale fading, and pilot sequence assignments. The proposed OB scheme achieves the best performance, followed by ZF and CB, respectively. At the $5\%$ outage rate, the net throughput of OB, ZF, and CB without pilot contamination ($\tau_p = K$) are approximately 28, 25, and 9.5 Mbps, respectively. These numbers are changed to 23, 19.5, and 9.5 Mbps when there is pilot contamination ($\tau_p = 20$). 

\begin{figure}[!t]
	\begin{center}
		\includegraphics[width=0.4\textwidth]{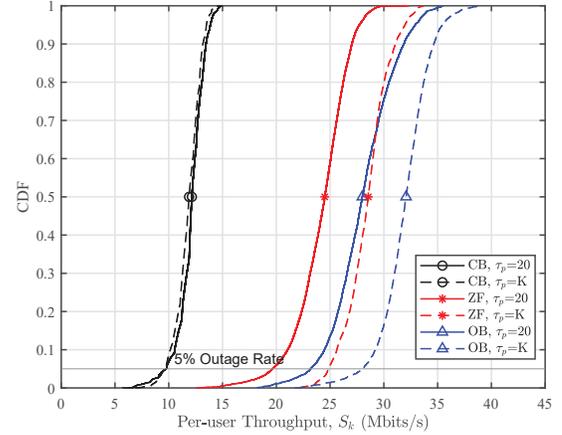}
	\end{center}
	\caption{CDFs of per-user downlink throughput with or without pilot contamination for CB, ZF, and OB.} \label{S_CDF}
\end{figure}

The performance improvement of OB is achieved at the cost of a higher computation complexity. The OB precoder needs to be optimized for each coherence interval of samll-scale fading, while the power control for ZF and CB are performed based on large-scale fading. The low net throughput of CB is mainly due to the fact that beamforming is performed locally at each AP without cooperation among APs. On the other hand, the beamforming of both OB and ZF is performed at the CU with global CSI. This cooperated design requires CSI exchange among APs and CU, which will cause significant increases in communications and overheads in front-haul links. Therefore, the performance of OB and ZF are more sensitive to non-ideal front-haul links with limited capacities.
Among the three schemes, ZF is the most sensitive to pilot contamination, while OB is slightly less affected. It is interesting to note that the net throughput of CB is slightly increased when reducing $\tau_p$ from 40 to 20. This means that the benefits of smaller overhead out-weight the negative effects of pilot contamination. Thus CB is robust against channel estimation errors.

Fig. \ref{S_PLength} studies the impacts of uplink pilot length and training overhead on the average and minimum per-user net throughput for different beamforming schemes. All systems are equipped with $100$ APs and $80$ UEs. The coherent interval is assumed to be 300. The downlink pilot sequence is $\tau_b = K = 80$. Both the average and minimum net throughput is concave in pilot length for all three beamforming schemes. The concavity indicates the tradeoff between overhead and pilot contamination. When the pilot length is small, the system performance is dominated by channel estimation errors due to the effects of pilot contamination. When the pilot length is large enough, the effects of overhead out-weights those of pilot contamination. The CB, ZF, and OB achieves the maximum net throughput at $\tau_p =$ 20, 40, and 40, respectively. CB achieves the peak net throughput with a shorter pilot, and this is consistent to the fact that CB is more robust to pilot contaminations.
\begin{figure}[!t]
	\begin{center}
		\includegraphics[width=0.4\textwidth]{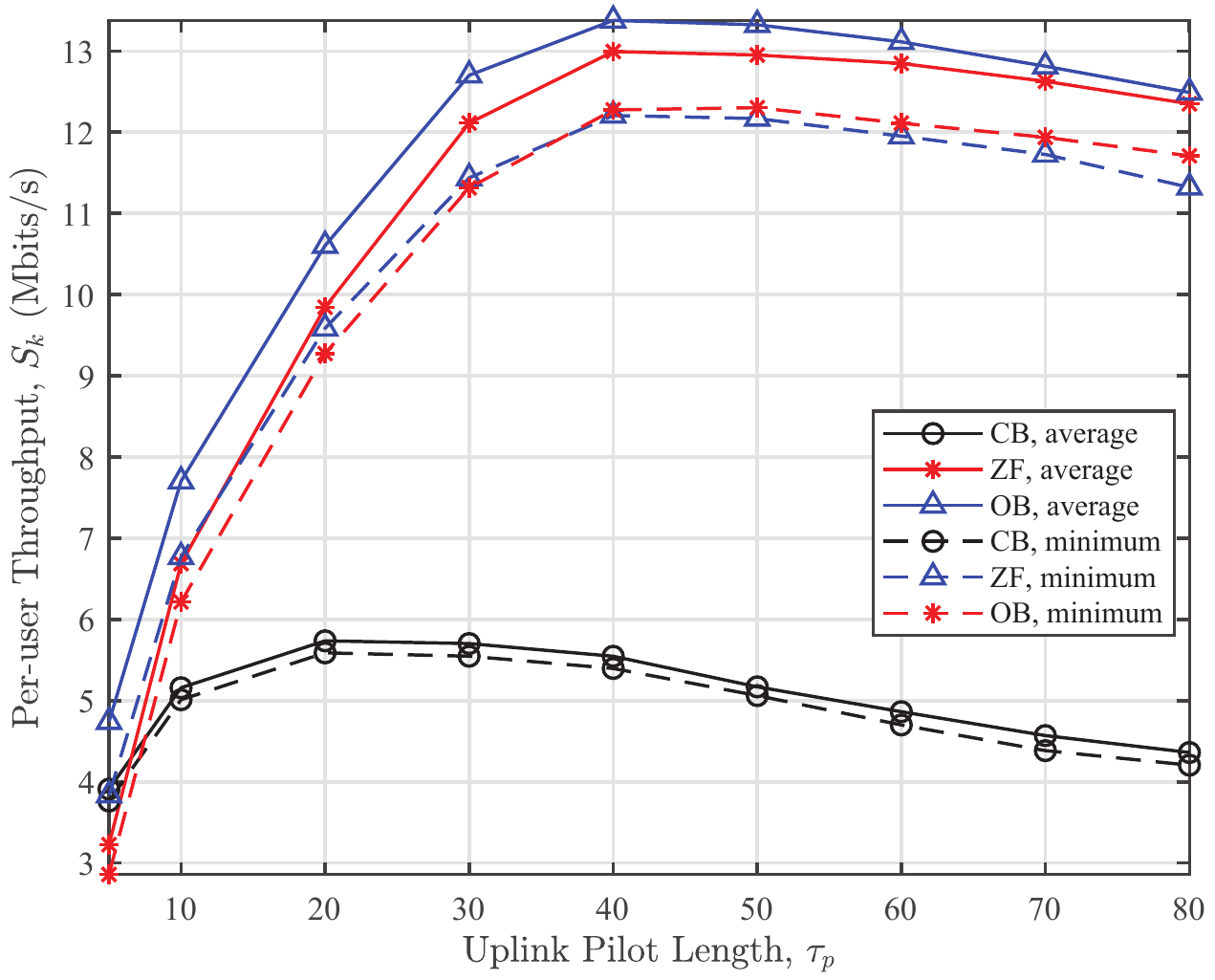}
	\end{center}
	\caption{Per-user throughput with different pilot length for CB, ZF, and OB with $ M $=100, $ K $=80.} \label{S_PLength}
\end{figure}

The complexities of the various algorithms are compared in Table \ref{time_comp} in terms of CPU time required during one iteration. {The simulations were performed on a Windows 10 workstation with 3.30 GHz i7-5820K CPU and 32.0 GB of random access memory.} The main complexities of CB and ZF come from power control, which is performed based on large scale fading. As expected, the complexity of OB is higher than both CB and ZF due to joint optimization of precoders and power control.

\begin{table}[!t]
\vspace{-0.5cm}
		\caption{CPU time (in seconds) comparison of OB, CB, ZF}\label{time_comp}
\vspace{-0.5cm}
	\begin{center}
	\begin{tabular}{|c|c|c|c|c|} 
		\hline
		 {\bf Number of users } & {\bf 10} & {\bf 20} & {\bf 30} & {\bf 40} \\ \hline
		{\bf OB} & {\color[HTML]{000000} 45.35241} & {\color[HTML]{000000} 176.6613} & {\color[HTML]{000000} 571.7387} & {\color[HTML]{000000} 1340.513} \\ \hline
		{\bf CB} & {\color[HTML]{000000} 51.90171} & {\color[HTML]{000000} 113.7298} & {\color[HTML]{000000} 217.6504} & {\color[HTML]{000000} 427.1219} \\ \hline
		{\bf ZF} & {\color[HTML]{000000} 3.599214} & {\color[HTML]{000000} 4.269728} & {\color[HTML]{000000} 5.192663} & {\color[HTML]{000000} 6.100092} \\ \hline 
	\end{tabular}
\end{center}
\vspace{-0.6cm}
\end{table}

\section{Conclusion}

The optimum beamforming of cell-free massive MIMO systems has been studied in this paper. With the knowledge of estimated channels from all APs, the optimum beamforming was performed at the CU by maximizing the minimum instantaneous SINR of all users. Under the max-min criterion, the optimum beamformer can simultaneously achieve beamforming and power control, thus provide uniformly good performance to all users in the network. Simulation results demonstrate that the proposed optimum beamforming scheme outperform zero-forcing and conjugate beamforming under all system configurations, but with a higher complexity. Among the three schemes, zero-forcing is most sensitive to pilot contamination, yet conjugate beamforming is very robust against pilot contamination.

\bibliographystyle{ieeetr}
\bibliography{References}

\end{document}